%
%

\documentclass[aip,reprint]{revtex4-1}

\draft 
\preprint{}
\usepackage[utf8]{inputenc}
\usepackage{amsmath,amsthm,amsfonts,amssymb}
\usepackage{graphicx}
\usepackage{natbib}
\usepackage{xcolor}
\usepackage{hyperref}
\hypersetup{
    colorlinks=true,       
    linktoc=page,
    linkcolor=blue,          
    citecolor=magenta,        
    filecolor=magenta,      
    urlcolor=blue,           
    linkbordercolor={1 1 1},
    citebordercolor={0 1 0},
    urlbordercolor={0 1 1}
}

\newtheorem{theorem}{Theorem}[section]
\theoremstyle{plain}

\theoremstyle{plain}
\newtheorem{example}{Example}[section]
\theoremstyle{definition}
\newtheorem{remark}{Remark}[section]
\theoremstyle{remark}

\newcommand{\hamil}{\mathcal{H}}
\newcommand{\Hamil}{\widetilde{\hamil}}

\begin{document}


\title{On Pseudo-Hermitian Hamiltonians}


\author{Soumendu Sundar Mukherjee}
\email[]{soumendu041@gmail.com}
\affiliation{Master of Statistics student, Indian Statistical Institute, Kolkata 700108, India}

\author{Pinaki Roy}
\email[]{pinaki@isical.ac.in}
\affiliation{Physics \& Applied Mathematics Unit, 
Indian Statistical Institute,
Kolkata 700108, India}
\date{\today}

\begin{abstract}
We investigate some questions on the construction of $\eta$ operators for pseudo-Hermitian Hamiltonians. We give a sufficient condition which can be exploited to systematically generate a sequence of $\eta$ operators starting from a known one, thereby proving the non-uniqueness of $\eta$ for a particular pseudo-Hermitian Hamiltonian. We also study perturbed Hamiltonians for which $\eta$'s corresponding to the original Hamiltonian still work.
\end{abstract}

\pacs{03.65.Ca; 03.65.Aa}
\keywords{pseudo-Hermitian, quasi-Hermitian, real spectra, momentum dependent interaction}
\maketitle

\section{Introduction}
  In recent years there has been a growing interest in the study of non-Hermitian quantum mechanics \citep{bender1998real}, primarily because a class of these Hamiltonians admit real eigenvalues despite being non-Hermitian. Among the different non-Hermitian systems the ones with $\mathcal{PT}$-symmetry \citep{bender1999pt,bender2002complex,fernandez1998strong,levai2000systematic,levai2002pt} and those which are $\eta$-pseudo-Hermitian \citep{mostafazadeh2002pseudo1, mostafazadeh2002pseudo2, mostafazadeh2002pseudo3, mostafazadeh2002pseudo4, mostafazadeh2002pseudo5} have been the most widely studied. Lately non-Hermitian quantum mechanics have found applications in diverse areas of physics \citep{longhi2009bloch,kottos2010optical,longhi2010pt,makris2008beam,guo2009observation,ruter2010observation}. 

  Here our objective is to examine some features of $\eta$-pseudo-Hermitian models \citep{mostafazadeh2002pseudo1, mostafazadeh2002pseudo2, mostafazadeh2002pseudo3, mostafazadeh2002pseudo4, mostafazadeh2002pseudo5}. It may be recalled that a non-Hermitian Hamiltonian $\hamil$ is called $\eta$-pseudo-Hermitian if it satisfies the condition
  \begin{equation}\label{eq:pseudo}
  \eta \hamil \eta^{-1} = \hamil^\dag,
  \end{equation}
  where $\eta$ is a Hermitian linear automorphism \citep{mostafazadeh2002pseudo1, mostafazadeh2002pseudo2, mostafazadeh2002pseudo3, mostafazadeh2002pseudo4, mostafazadeh2002pseudo5}. However, there is no definite method to determine $\eta$ and for different models it has to be constructed in different ways. Another characteristic of $\eta$ is that it is not unique. Here we propose to examine the construction of $\eta$ operators for a class of $\eta$-pseudo-Hermitian matrix Hamiltonians. We shall also discuss briefly non-matrix Hamiltonians.

\section{Two Theorems on $\eta$}
  In this section we consider a particular class of pseudo-Hermitian Hamiltonians and present a general procedure to get an infinite set of $\eta$'s provided that we know one for a particular Hamiltonian. This will, in particular, explicitly demonstrate that $\eta$ is not unique. We shall also construct a class of non-Hermitian Hamiltonians whose members are pseudo-Hermitian with respect to the same $\eta$.

\begin{theorem}\label{t1}
  Let $\hamil$ be a non-Hermitian Hamiltonian that is pseudo-Hermitian, i.e. there exists a Hermitian linear automorphism $\eta$ such that $\eta \hamil \eta^{-1} = \hamil^\dag$. Suppose that $\hamil^\dag$ is invertible. Then there exists a possibly infinite no.of $\eta$'s, with respect to which $\hamil$ is pseudo-Hermitian.
\end{theorem}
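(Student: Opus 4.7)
The plan is to exhibit an explicit one-parameter family of candidates built from the given $\eta$ and from powers of $\hamil$ itself, and then verify that each member satisfies the three required properties: Hermiticity, invertibility, and the pseudo-Hermiticity intertwining relation.

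Concretely, I would define $\eta_n := \eta\,\hamil^n$ for $n = 0, 1, 2, \ldots$ (and in fact for all integers, since the hypothesis that $\hamil^\dag$ is invertible forces $\hamil$ itself to be invertible). The first step is to check Hermiticity: using the defining relation $\hamil^\dag = \eta \hamil \eta^{-1}$ iterated $n$ times one gets $(\hamil^\dag)^n = \eta\,\hamil^n\,\eta^{-1}$, so that
\begin{equation*}
(\eta\,\hamil^n)^\dag = (\hamil^\dag)^n\,\eta^\dag = \eta\,\hamil^n\,\eta^{-1}\,\eta = \eta\,\hamil^n,
\end{equation*}
which is the desired Hermiticity of $\eta_n$. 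The second step is invertibility, which is immediate since $\eta$ is an automorphism and $\hamil$ is invertible (its adjoint is, by hypothesis).

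The third and key step is verifying the intertwining $\eta_n \hamil \eta_n^{-1} = \hamil^\dag$. This is a short one-line calculation: $\eta\,\hamil^n\,\hamil\,\hamil^{-n}\,\eta^{-1} = \eta\,\hamil\,\eta^{-1} = \hamil^\dag$, so $\hamil$ is indeed $\eta_n$-pseudo-Hermitian for every $n$. This gives the promised sequence of $\eta$ operators generated systematically from a single known one.

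The only subtle point — and what I would flag as the main obstacle to claiming infinitely many \emph{distinct} $\eta$'s — is ruling out collapses of the form $\eta_n = \eta_m$, which would force $\hamil^{n-m} = I$. In other words, the family is genuinely infinite unless $\hamil$ is of finite multiplicative order, a degenerate situation that does not occur for generic non-Hermitian Hamiltonians. This is precisely the reason the theorem is phrased with the qualifier ``possibly infinite'': the construction always produces the sequence $\{\eta_n\}$, and the cardinality of the resulting set of genuinely new $\eta$'s is controlled by the order (finite or infinite) of $\hamil$ under multiplication.
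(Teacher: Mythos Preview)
Your proof is correct and follows essentially the same route as the paper's: the paper defines $\eta_k = (\hamil^\dag)^k \eta$ and verifies Hermiticity, invertibility, and the intertwining relation by induction on $k$, while your family $\eta_n = \eta\,\hamil^n$ is literally the same sequence (since $\eta\,\hamil^n = (\hamil^\dag)^n\eta$ by iterating the pseudo-Hermiticity relation) checked by a direct rather than inductive argument. Your extra remarks on allowing negative $n$ and on when the sequence collapses ($\hamil$ of finite multiplicative order) go slightly beyond the paper's treatment but do not alter the underlying construction.
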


\begin{proof}
  Let $\eta_0 = \eta$. Define a sequence $\{ \eta_k \}_{k \geq 1}$ by $\eta_k = \hamil^\dag \eta_{k-1}$. We claim that each $\eta$ in $\{ \eta_k \}_{k \geq 0}$ renders $\hamil$ pseudo-Hermitian.\\

  We prove the claim by induction. Suppose that for some $m \geq 0$, $\eta_m$ satisfies the claim, i.e., $\eta_m$ is a Hermitian linear automorphism with $\eta_m \hamil \eta_m^{-1} = \hamil^\dag$. Consider $\eta_{m+1}$. Clearly,
  \begin{align*}
  \eta_{m+1}^\dag & = (\hamil^\dag \eta_m)^\dag = \eta_m^\dag \hamil = \hamil^\dag \eta_m = \eta_{m+1}.
  \end{align*}
    
  Also, as both $\hamil^\dag$ and $\eta_m$ are linear automorphisms, so is $\eta_{m+1}$. It remains therefore to check that $\eta_{m+1} \hamil \eta_{m+1}^{-1}=\hamil^\dag$. Indeed we have
  \begin{align*}
    \eta_{m+1} \hamil \eta_{m+1}^{-1} &= \hamil^\dag \eta_m \hamil (\hamil^\dag \eta_m)^{-1} \\
    &= \hamil^\dag \eta_m \hamil \eta_m^{-1} (\hamil^\dag)^{-1}\\
    & = \hamil^\dag \hamil^\dag (\hamil^\dag)^{-1} \\
    &= \hamil^\dag.
  \end{align*}
  This completes the induction since $\eta_0$ satisfies the claim by definition.
\end{proof}

\begin{remark}
  By the definition in the above theorem we have $\eta_k = (\hamil^\dag)^k \eta$.
\end{remark}

\begin{example}\label{exm:berry}
  Consider now a quantum particle on a coordinate axis consisting of two points. The Hamiltonian for this system is given by a $2\times 2$ matrix of the form \citep{bender2002generalized}
\begin{equation}
  \hamil =
  \begin{pmatrix}
  	x & y\\
  	\bar{y} & \bar{x}
  \end{pmatrix} 
\end{equation}
where $\Im(x) \neq 0$. This Hamiltonian is $\mathcal{PT}$-symmetric \citep{bender2002generalized}. However it is also pseudo Hermitian w.r.t. the following choice of $\eta$:\\
  \[
    \eta =
    \begin{cases}
      \begin{pmatrix}
	0 & y\\
	\bar{y} & 0
      \end{pmatrix},	&\text{ if $y \neq 0$;}\\
      \begin{pmatrix}
	0 & 1\\
	1 & 0
      \end{pmatrix},	&\text{ otherwise.}
    \end{cases}
  \]
  Therefore, in the case $y = 0$, we have
  \[
  \eta_k = (\hamil^\dag)^k \eta =
  \begin{pmatrix}
    0 & \bar{x}^k\\
    x^k & 0
  \end{pmatrix}.
  \]
  Although the  operator can be also be obtained when $y \neq 0$, the expression for $(\hamil^\dag)^k$ is cumbersome and so, we omit it.
\end{example}

\begin{example}\label{exm:mostafa}
  Consider now a two level system associated with the classical motion of a simple harmonic oscillator with frequency $\omega(t)$. Then the relevant Hamiltonian is of the form \citep{mostafazadeh2002pseudo1,mostafazadeh2002pseudo2,mostafazadeh2002pseudo3,mostafazadeh2002pseudo4,mostafazadeh2002pseudo5}
\begin{equation}
  \hamil(t) =
  \begin{pmatrix}
 	 0 & i\\
     -i \omega(t)^2 & 0
  \end{pmatrix}.
\end{equation}
  It can be easily verified that in this case one can choose
  \[
    \eta =
    \begin{pmatrix}
    0 & i\\
    -i & 0
    \end{pmatrix}
  \]
   Here $|\hamil(t)| = -\omega(t)^2$, which is non-vanishing if $\omega(t) \neq 0$. Then an application of Theorem \ref{t1} gives $\eta_k(t) = \hamil^\dag(t)^k \eta$. Noting that $\hamil^\dag(t)^2 = \omega(t)^2 \mathbf{I}_2$, we have
  \[
    \eta_k(t) =
    \begin{cases}
      \omega(t)^k \eta, 	&\text{ when $k$ is even;}\\
      \omega(t)^{k-1} \hamil^\dag(t) \eta, 	&\text{ when $k$ is odd.}
    \end{cases}
  \]
\end{example}

  Next we investigate if a non-Hermitian Hamiltonian is pseudo-Hermitian with respect to a particular $\eta$, then whether or not there exists other Hamiltonians which are pseudo-Hermitian with respect to the same $\eta$. Observe that if an $\eta$ works for two different non-Hermitian Hamiltonians, it still works for their difference, which need not be non-Hermitian in general. We use this observation in the following

\begin{theorem}\label{t2}
  Consider a non-Hermitian Hamiltonian $\hamil$. Suppose $\eta$ is a Hermitian linear automorphism such that $\eta \hamil \eta^{-1} = \hamil^\dag$. Further suppose $K$ is Hermitian and commutes with $\eta$. Let $\Hamil =\hamil + K$. Then $\Hamil$ is also non-Hermitian and pseudo Hermitian with respect to $\eta$. 
\end{theorem}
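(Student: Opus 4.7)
The plan is to verify the two conclusions by direct calculation, using nothing beyond linearity of the adjoint together with the two structural hypotheses on $K$. For the first conclusion (non-Hermiticity of $\Hamil$), I would take adjoints: since $K$ is Hermitian, $\Hamil^\dag = \hamil^\dag + K^\dag = \hamil^\dag + K$. Comparing this with $\Hamil = \hamil + K$, the difference is $\hamil^\dag - \hamil$, which is nonzero by the non-Hermiticity of $\hamil$. Hence $\Hamil \neq \Hamil^\dag$, so $\Hamil$ is non-Hermitian.

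For the pseudo-Hermiticity relation $\eta \Hamil \eta^{-1} = \Hamil^\dag$, I would split the left-hand side using linearity of conjugation by $\eta$:
\[
\eta \Hamil \eta^{-1} \;=\; \eta \hamil \eta^{-1} + \eta K \eta^{-1}.
\]
The first term equals $\hamil^\dag$ by the pseudo-Hermiticity hypothesis on $\hamil$. For the second term, the assumption $[K,\eta]=0$ gives $\eta K \eta^{-1} = K$, and Hermiticity of $K$ then lets us rewrite this as $K^\dag$. Summing yields $\hamil^\dag + K^\dag = (\hamil + K)^\dag = \Hamil^\dag$, as desired.

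I do not anticipate a genuine obstacle: the statement is essentially immediate once the hypotheses are used in the right order. The one subtlety worth flagging is the role of the commutativity assumption $[K,\eta]=0$, which is what guarantees that $\eta K \eta^{-1}$ reproduces $K$ itself rather than some similarity transform of it; without this the extra term in $\eta \Hamil \eta^{-1}$ would fail to coincide with the extra term $K$ in $\Hamil^\dag$, and the argument would collapse. Hermiticity of $K$ is needed only to ensure that this surviving $K$ matches $K^\dag$ on the right-hand side.
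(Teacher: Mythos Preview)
Your argument is correct and matches the paper's proof essentially line for line: both verify non-Hermiticity by computing $\Hamil^\dag=\hamil^\dag+K$ and comparing, then split $\eta\Hamil\eta^{-1}$ linearly and use $[K,\eta]=0$ to reduce the $K$-term. The only cosmetic difference is that the paper writes $\eta K\eta^{-1}=K\eta\eta^{-1}=K$ and stops there (since $\Hamil^\dag=\hamil^\dag+K$ was already established), whereas you additionally rewrite $K$ as $K^\dag$ before summing; this is immaterial.
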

\begin{proof}
  Clearly, $\Hamil^\dag = \hamil^\dag + K^\dag = \hamil^\dag + K \neq \hamil + K = \Hamil$. So, $\Hamil$ is not Hermitian. Now,
  \begin{align*}
    \eta \Hamil \eta^{-1} &= \eta (\hamil + K) \eta^{-1} \\
    &= \eta \hamil \eta^{-1} + \eta K \eta^{-1}\\
    &= \hamil^\dag + K \eta \eta^{-1}\\
    &= \hamil^\dag + K \\
    &= \Hamil^\dag,
  \end{align*}
  which completes the proof.
\end{proof} 

\begin{remark}\label{quasi-herm}
Recall that a pseudo-Hermitian Hamiltonian is called \emph{quasi-Hermitian} if (\ref{eq:pseudo}) holds with $\eta$ being a \emph{positive} Hermitian linear automorphism\citep{kretschmer2004quasi,scholtz1992quasi,siegl2008quasi}. The positiveness of $\eta$ guarantees that $\eta^{\frac{1}{2}}$ is well-defined and then  $\hamil_{\eta}:=\eta^{\frac{1}{2}}\hamil\eta^{-\frac{1}{2}}$ becomes a Hermitian Hamiltonian. In other words, $\hamil$ becomes a Hermitian Hamiltonian in the induced Hilbert space with induced inner product $\langle  \phi | \psi \rangle_{\eta}:=\langle \phi | \eta \psi \rangle$. Note that Theorem \ref{t2} remains valid for quasi-Hermitian Hamiltonians also.
\end{remark}

\begin{remark}\label{rem:poly}
  Under the hypotheses of Theorem \ref{t2}, $f(K)$ is also Hermitian for any $f(x) \in \mathbb{R}[x]$, the ring of all real polynomials, and commutes with $\eta$. Thus, the same $\eta$ works for $\Hamil = \hamil + f(K)$ also.
\end{remark}

  Below we give two automatic choices for $K$.

\begin{enumerate}
  \item Take $K = \alpha \eta$, where $ \alpha \in \mathbb{R}$. Then $K$ satisfies the hypotheses of the above theorem and hence $\eta$ works for the class of Hamiltonians defined by $\hamil_{\alpha} = \hamil + \alpha \eta$.
  \item Take $K = \alpha \mathbf{I}$, where $\alpha \in \mathbb{R}$. Then clearly $K$ is Hermitian and commutes with every other operator, in particular $\eta$. Thus $K$ satisfies the hypotheses of the above theorem and hence for $\Hamil = \hamil + \alpha \mathbf{I}$, the same $\eta$ works.
\end{enumerate}

  Considering Remark \ref{rem:poly} these two choices can be united into $K=f(\eta)$, where $f(x)\in\mathbb{R}[x]$, and $\eta^0$ is defined to be $\mathbf{I}$.
  
  Thus we see that if one has already constructed the operator $\eta$ for a particular pseudo-Hermitian (or quasi-Hermitian) Hamiltonian $\hamil$, then the same $\eta$ will work for several perturbed versions of $\hamil$. We illustrate this idea with some examples.

\begin{example}
  Consider the Hamiltonian in Example \ref{exm:mostafa}. Here
  \[
    \hamil_{\alpha, k}(t) = \hamil(t) + \alpha \eta_{k}(t),
  \]
  and $\hamil_{\alpha, k}(t)$ is pseudo-Hermitian with respect to $\eta_{k}(t)$.
\end{example}

We shall now consider a class of non-matrix Hamiltonians. It was shown by Bender and Boettcher\citep{bender1998real} that each member of the class of Hamiltonians $\hamil = p^2 + m^2 x^2 - (ix)^N$, $N$ real, has a real spectrum and they conjectured that this is due to $\mathcal{PT}$-symmetry. Although several $\mathcal{PT}$-symmetric Hamiltonians were found to posses real discrete spectra, it was also found that the non-$\mathcal{PT}$-symmetric complex potential of the form \citep{ahmed2001pseudo},
\begin{equation}
\widetilde{V}(x) = \alpha V(x - \beta - i\gamma) \label{potential}
\end{equation}
always yields a real spectrum and Hamiltonians with these potentials are pseudo-Hermitian with respect to
\begin{equation}
\eta = e^{-\theta p},
\end{equation}
where the parameter $\theta=2\gamma$ takes different forms for different potentials. 

It may be noted that $e^{-\theta p}$ has the following two nice properties:
\begin{enumerate}
  \item $e^{-\theta p} c e^{\theta p} = c$, for any constant $c \in \mathbb{C}$, and
  \item $e^{-\theta p} p e^{\theta p} = p$.
\end{enumerate}
  In fact, it follows from the above two properties that $e^{-\theta p} f(p) e^{\theta p} = f(p)$, where $f(x) \in \mathbb{C}[x]$, the ring of all complex polynomials. Now, since $p$ is Hermitian, we can take $K = p$ or more generally, in view of Remark \ref{rem:poly}, 
  \[
    K=f(p),\text{ where }f(x)\in\mathbb{R}[x].
  \]
  So, $\eta = e^{-\theta p}$ renders $\Hamil$ pseudo-Hermitian with an appropriate choice of $\theta$, where
  \[
    \Hamil = p^2 + f(p) + \widetilde{V}(x),
  \]
 and $\widetilde{V}(x)$ is given by (\ref{potential}). It may be noted that the above Hamiltonian represents momentum dependent interaction \citep{swanson2004transition}.

\begin{example}
  In view of the discussion in the above paragraph, let us look at a concrete example. Take $f(x) = \alpha x$ where $\alpha \in \mathbb{R}$. Then
  \[
    \Hamil = p^2 + \alpha p + \widetilde{V}(x).
  \]
\end{example}

\begin{remark}
  The assumption in Theorem \ref{t1} that $\hamil^\dag$ is invertible seems to be a bit restrictive. We can, however, relax this assumption to some extent by combining Theorems \ref{t1} and \ref{t2} in the following manner. Note that invertibility of $\hamil^\dag$ is equivalent to $0$ not being in the spectrum of $\hamil$. Suppose that $\hamil$ has a discrete spectrum $spec(\hamil)$ (By definition, for an operator $\hamil$ on a complex Hilbert space, $spec(\hamil) = \{ \lambda \in \mathbb{C} : \lambda \mathbf{I} - \hamil \text{ is singular} \}$). If $0 \notin spec(\hamil)$, Theorem \ref{t1} is readily applicable. If not, then we can find $\alpha \in \mathbb{R}$, such that $0 \notin spec(\hamil + \alpha \mathbf{I})$. Defining $\Hamil = \hamil + \alpha \mathbf{I}$, it follows that $\Hamil^\dag$ is invertible. Now Theorem \ref{t2} applies to $\hamil = \Hamil + (-\alpha) \mathbf{I}$. So, the class of $\eta$'s generated for $\Hamil$ using Theorem \ref{t1} works for $\hamil$.
\end{remark}

\section{Summary}
Let us now summarize the main results of this paper. We have given sufficient conditions using which one can systematically construct a sequence of $\eta$ operators for a pseudo-Hermitian Hamiltonian starting from a known one. We have also investigated conditions which ensure that a single  $\eta$ operator works for two different pseudo-Hermitian (or quasi-Hermitian) Hamiltonians. As the construction of an $\eta$ operator is not always straightforward for a general non-Hermitian Hamiltonian, these conditions are useful to enlarge the class of non-Hermitian Hamiltonians to which a known $\eta$ operator applies. This ideas are illustrated through a few useful examples.

\bibliography{pseudo}
\end{document}